\documentclass[aps,pra,superscriptaddress,reprint]{revtex4-2}
\usepackage{graphicx}
\usepackage{amssymb,amsthm,amsmath,braket,mathtools,overpic}
\usepackage[bookmarksnumbered,linktocpage,hypertexnames=false,colorlinks=true,linkcolor=blue,urlcolor=blue,citecolor=blue,anchorcolor=green,breaklinks=true,pdfusetitle]{hyperref}
\usepackage[capitalize,nameinlink]{cleveref}
\usepackage{chngcntr}
\usepackage{titlesec}

\usepackage[caption=false]{subfig}

\newtheorem{thm}{Theorem}\crefname{thm}{Theorem}{Theorems}
\newtheorem{lem}[thm]{Lemma}\crefname{lem}{Lemma}{Lemmas}
\theoremstyle{definition}
\crefname{def}{Definition}{Definitions}

\DeclarePairedDelimiter\abs{\lvert}{\rvert}
\DeclarePairedDelimiter\norm{\lVert}{\rVert}

\newcommand{\eps}{\varepsilon}
\newcommand{\microsec}{\mu\textrm{s}}

\titleformat{\section}[runin]{\normalfont\bfseries\itshape}{\thesection}{1em}{}[---]
\titlespacing*{\section}{\parindent}{\baselineskip}{0pt}

\titleformat{\subsection}[runin]{\normalfont\itshape}{\thesubsection}{1em}{}[---]
\titlespacing*{\subsection}{\parindent}{\baselineskip}{0pt}

\begin{document}

\title{Adiabatic preparation of many-body quantum states: getting the beginning and ending right}
\author{Emil T. M. Pedersen}
\affiliation{Department of Mathematical Sciences, University of Copenhagen, Universitetsparken 5, 2100 Copenhagen, Denmark}
\author{Freek Witteveen}
\affiliation{Centrum Wiskunde \& Informatica and QuSoft, Science Park 123, 1098 XG Amsterdam, the Netherlands}
\author{Klaus M{\o}lmer}
\affiliation{Niels Bohr Institute, University of Copenhagen, Jagtvej 155, 2200 Copenhagen, Denmark}
\author{Matthias Christandl}
\affiliation{Department of Mathematical Sciences, University of Copenhagen, Universitetsparken 5, 2100 Copenhagen, Denmark}

\begin{abstract}
We present numerical calculations, and simulations performed on a Rydberg atom quantum simulator, of the adiabatic evolution of many-body quantum systems around a quantum phase transition. We demonstrate that the end-to-end transfer error, for a given process duration and dissipative losses, can be suppressed by adopting smooth initial and final scheduling functions for the Hamiltonian. We consider a one-dimensional mixed-field Ising model, as well as a chain of Rydberg atoms, and compare numerical calculations and experimental results for the end-to-end transfer error with different schedule functions. We show, in particular, that if the time dependent Hamiltonian is $n$ times differentiable with vanishing $1^{st}$ to $n^{th}$ order derivatives in the beginning and in the end, the infidelity with respect to the final adiabatic eigenstate scales as $1/T^{n+1}$ when evolving for time $T$. 
\end{abstract}

\maketitle

The adiabatic theorem~\cite{Born} states that a system prepared in an eigenstate of a Hamiltonian $H$ and evolved under a slowly varying Hamiltonian $H(t)$ remains in the corresponding instantaneous eigenstate, up to small non-adiabatic corrections. Adiabatic evolution is therefore widely used for preparing eigenstates of complex many-body Hamiltonians. Several strategies are known to suppress non-adiabatic transitions, including locally slowing the evolution near small gaps~\cite{Avron}, choosing favorable paths in parameter space~\cite{Schiffer}, and adding counter-diabatic terms that cancel diabatic couplings~\cite{Muga}. The latter yields a `shortcut to adiabaticity', where the state is subject to a \emph{modified} Hamiltonian, in order to follow the eigenstate of the original target Hamiltonian. These techniques can be combined with variational and optimal-control methods~\cite{Motzoi,Opatrny,caneva2009optimal,brif2014exploring}.

Here we focus on a different aspect of adiabatic evolution that has been known since early work: the fidelity of the \emph{final} state can be much higher than the fidelity of the instantaneous state during the evolution. In the Landau--Zener problem~\cite{landau1932theorie,zener1932non,Majorana}, a constant coupling $V$ couples two levels with linearly varying energy difference $\Delta E = \alpha t$. The leading-order non-adiabatic error near the avoided crossing scales as $\alpha/V$, consistent with the Born--Fock estimates. However, at large positive times the loss of population from the adiabatic state is exponentially small, $\propto \exp(-V/\alpha)$~\cite{dykhne1960quantum,davis1975nonadiabatic}. Similar exponential suppression appears in other analytically solvable two-level problems and can be understood via superadiabatic basis transformations that absorb low-order non-adiabatic corrections~\cite{berry1993universal}. Crucial to this mechanism is the requirement that the adiabatic eigenstates connect smoothly to the initial and final state, i.e., that the time dependent Hamiltonian evolves in a smooth manner throughout the entire process. Additionally, the derivatives need to vanish at the beginning and end of the Hamiltonian path.

In this work, we investigate the precise relation between the smoothness of the boundary conditions and adiabatic state preparation in many-body systems. We combine three components.
We prove a rigorous adiabatic theorem showing that $n$ vanishing time derivatives of the Hamiltonian at the initial and final times give an error bound of $O(\eps^{n+1})$ with $\eps = 1/T$, improving previous rigorous estimates of $O(\eps^n)$.
We then study this numerically for a one-dimensional mixed-field Ising model, where we construct schedule functions that control the number of vanishing derivatives at the boundaries while leaving the bulk dynamics largely unchanged.
Finally, we study this numerically and experimentally on a Rydberg atom chain implemented on the Aquila neutral-atom quantum simulator~\cite{wurtz2023aquila}, where we test these schedule constructions under noise and measurement errors.
Our main conclusion is that enforcing smoothness at the beginning and end of the protocol reduces end-to-end infidelity, and that this improvement can be obtained with minimal modification of a given schedule.

\section{Adiabatic theorem}\label{sec:adiabatic theorem}
We recall the setting of adiabatic Hamiltonian evolution.
Consider an $n$ times continuously differentiable family of Hamiltonians $H(\tau)$, where $0\leq \tau\leq 1$.
We consider time evolution under the Hamiltonian $H(t/T)$ for time $t=0$ to $t=T$ and write
\begin{align*}
    \eps = T^{-1}, \qquad \tau = t/T.
\end{align*}
Let $\ket{\Phi(\tau)}$ be an eigenstate of $H(\tau)$ with eigenvalue $E(\tau)$, separated from the rest of the spectrum by a finite gap (so there are no level crossings and the state and energy are continuous functions of $\tau$). A true state $\ket{\psi(\tau)}$ evolves according to the Schr\"odinger equation with initial condition $\ket{\psi(0)} = \ket{\Phi(0)}$. For $\eps$ small, the adiabatic theorem guarantees that $\ket{\psi(1)}$ has large overlap with $\ket{\Phi(1)}$.

The error depends, amongst other things, on the number of vanishing derivatives of $H(\tau)$ at $\tau=0$ and $\tau=1$. If $H(\tau)$ has $n$ vanishing derivatives at both boundaries, previous rigorous results show an error bound of order $O(\eps^n)$~\cite{garrido1962degree,lidar2009adiabatic,jansen2007bounds,hagedorn2002elementary,campos2018error}. We extend these results as follows.

\begin{thm}\label{thm:adiabatic derivatives}
    Consider a Hamiltonian $H(\tau)$ for $\tau \in [0,1]$ that is $n$ times differentiable in $\tau$, with the first $n$ derivatives vanishing at $\tau=0$ and $\tau=1$. 
     Let $E(\tau)$ be an eigenvalue of $H(\tau)$ separated from the rest of the spectrum by a finite gap, and let $\ket{\Phi(\tau)}$ be the corresponding eigenstate. Denote by $\ket{\psi(\tau)}$ the state obtained by evolving from $\ket{\psi(0)} = \ket{\Phi(0)}$ under $H(t/T)$ with $\eps = 1/T$. Then
    \begin{align*}
        \norm{\ket{\psi(1)}-e^{-i\int_0^1 E(\tau')d\tau'/\eps}\ket{\Phi(1)}} \leq O(\eps^{n+1})
    \end{align*}
\end{thm}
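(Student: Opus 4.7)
The plan is to expand the transition amplitude as an oscillatory integral and apply integration by parts: $n$ times to annihilate the boundary terms using the vanishing derivative hypothesis, plus one additional time to extract the extra factor of $\eps$ even without a vanishing boundary. In the adiabatic interaction picture I would write $\ket{\psi(\tau)} = \sum_k c_k(\tau)\,e^{-i\int_0^\tau E_k(\tau')d\tau'/\eps}\ket{\Phi_k(\tau)}$ with $\ket{\Phi_0} = \ket{\Phi}$, so that the Schr\"odinger equation becomes $\dot c_k = -\sum_{j\neq k} c_j\,\frac{\bra{\Phi_k}\dot H\ket{\Phi_j}}{E_j - E_k}\,e^{i\phi_{kj}/\eps}$, where $\phi_{kj}(\tau) = \int_0^\tau(E_k - E_j)\,d\tau'$. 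The gap assumption ensures $|\dot\phi_{k0}|$ is uniformly bounded below, which is what makes the oscillatory integration by parts work.

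To leading order in the Dyson iteration, $c_k(1) \approx -\int_0^1 F_0(\tau)\,e^{i\phi_{k0}/\eps}\,d\tau$, where $F_0$ is a smooth function containing one derivative of $H$. Using $e^{i\phi/\eps} = (\eps/i\dot\phi)\,\partial_\tau e^{i\phi/\eps}$, each integration by parts produces a boundary contribution and a new oscillatory integral with an additional factor of $\eps$ and one more derivative of $H$ in the integrand. Iterating $n$ times yields $c_k(1) \approx (-\eps)^n \int_0^1 F_n\,e^{i\phi_{k0}/\eps}\,d\tau$, with every intermediate boundary term vanishing because the $m$-th such term involves $H^{(1)},\dots,H^{(m)}$ at the endpoints, all of which vanish for $m\leq n$ by hypothesis. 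The crucial observation is then that one further integration by parts produces a boundary term that is already $O(\eps^{n+1})$ — the $\eps$ prefactor from the integration by parts itself absorbs the possibly nonzero $F_n$ at the boundary — plus an even smaller bulk integral, giving $c_k(1) = O(\eps^{n+1})$.

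The main obstacle is controlling the sum over $k$ and the higher-order Dyson contributions uniformly, especially in a many-body setting where the orthogonal spectrum can be dense. The cleanest way to handle both is to recast the argument in operator-valued form using the spectral projector $P(\tau) = \ket{\Phi(\tau)}\!\bra{\Phi(\tau)}$ and the reduced resolvent $R(\tau) = (H(\tau) - E(\tau))^{-1}(1 - P(\tau))$, which is bounded and smooth on $[0,1]$ by the gap. This reformulation is the Kato--Nenciu superadiabatic projector construction: build $P_\eps(\tau) = P(\tau) + \sum_{k=1}^{n+1}\eps^k Q_k(\tau)$ with each $Q_k$ a polynomial in $P$, $R$, and the derivatives $P^{(1)},\dots,P^{(k)}$, chosen so that $[H, P_\eps] - i\eps\,\partial_\tau P_\eps = O(\eps^{n+2})$. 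The same vanishing-derivative structure forces $Q_k(0) = Q_k(1) = 0$ for every $k\leq n$, so $P_\eps$ matches $P$ at the endpoints up to $O(\eps^{n+1})$, and a Gronwall estimate on $U(T)^\dagger P_\eps(\tau) U(T)$ closes the argument.
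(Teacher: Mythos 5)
Your proposal is essentially the paper's proof: both rest on the superadiabatic expansion carried one order beyond $n$, the fact that the order-$k$ corrections vanish at the endpoints when $H^{(1)},\dots,H^{(k)}$ do there (\cref{lem:vanishing derivatives}), and the $O(\eps^{n+2})$ truncation residual integrating to an $O(\eps^{n+1})$ bulk error (\cref{lem:error bound superadiabatic}); the paper simply works with the expanded \emph{state} $\ket{\Psi_n(\tau,\eps)}$ following Hagedorn--Lidar rather than the Kato--Nenciu projector $P_\eps$, and your integration-by-parts picture is the first-order-Dyson shadow of the same mechanism. The one point to watch is that the projector formulation directly bounds only $\norm{(1-P(1))\ket{\psi(1)}}$, whereas the stated result is a vector-norm bound that also requires controlling the phase of $\braket{\Phi(1)|\psi(1)}$ against $e^{-i\int_0^1 E(\tau')d\tau'/\eps}$ --- the paper's state-level expansion handles this through the scalar $\theta(\tau)$ with $\theta(0)=1$ and the estimate $\abs{\theta(1)-1}=O(\eps^{n+1})$, so you would need to add that small piece of bookkeeping to close your version.
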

The proof, given in \cref{sec:proof}, is based on the superadiabatic expansion \cite{berry1993universal,hagedorn2002elementary,lidar2009adiabatic}; the leading order error term for small $\eps$ comes from the boundary component of this expansion.
The scaling $O(\eps^{n+1})$ dominates only for sufficiently small $\eps$. For larger $\eps$ the adiabatic error decays approximately as $\exp(-c/\eps)$, where $c$ depends on the analyticity properties of $H(\tau)$ and on the minimal spectral gap between $E(\tau)$ and the rest of the spectrum~\cite{lidar2009adiabatic,hagedorn2002elementary}. We refer to the range of $\eps$ where the polynomial behavior dominates as the \emph{polynomial regime}, and to the range where the exponential dependence dominates as the \emph{exponential regime}. In the following sections, we explore this numerically in concrete many-body models.

\section{Ising model}\label{sec:Ising}

We consider a one-dimensional mixed-field Ising chain with nearest-neighbor couplings,
\begin{equation}\label{eq:MFIM}
    H = \sum_{i=1}^{L-1}S_i^z S_{i+1}^z + \sum_{i=1}^{L} (gS_i^x+hS_i^z),
\end{equation}
with spin-$\tfrac{1}{2}$ operators $S^x_i$ and $S_i^z$. \cref{fig:schedule} shows the energy gap between the ground and lowest excited state as a function of $g$ and $h$ for $L = 21$. For $g=0$ and $h<-1$, the ground state is the ferromagnetic product state $\ket{\Psi_{\textrm{initial}}}=\ket{0}^{\otimes L}$, while for $g = 0$ and $-1 < h < 0$ it is antiferromagnetic, $\ket{\Psi_{\mathrm{target}}} = \ket{0101\ldots 0}$. We use adiabatic evolution to transform $\ket{\Psi_{\mathrm{initial}}}$ into $\ket{\Psi_{\mathrm{target}}}$.

\begin{figure}[t]
  \centering
  \begin{overpic}[width=0.95\linewidth,grid=false]{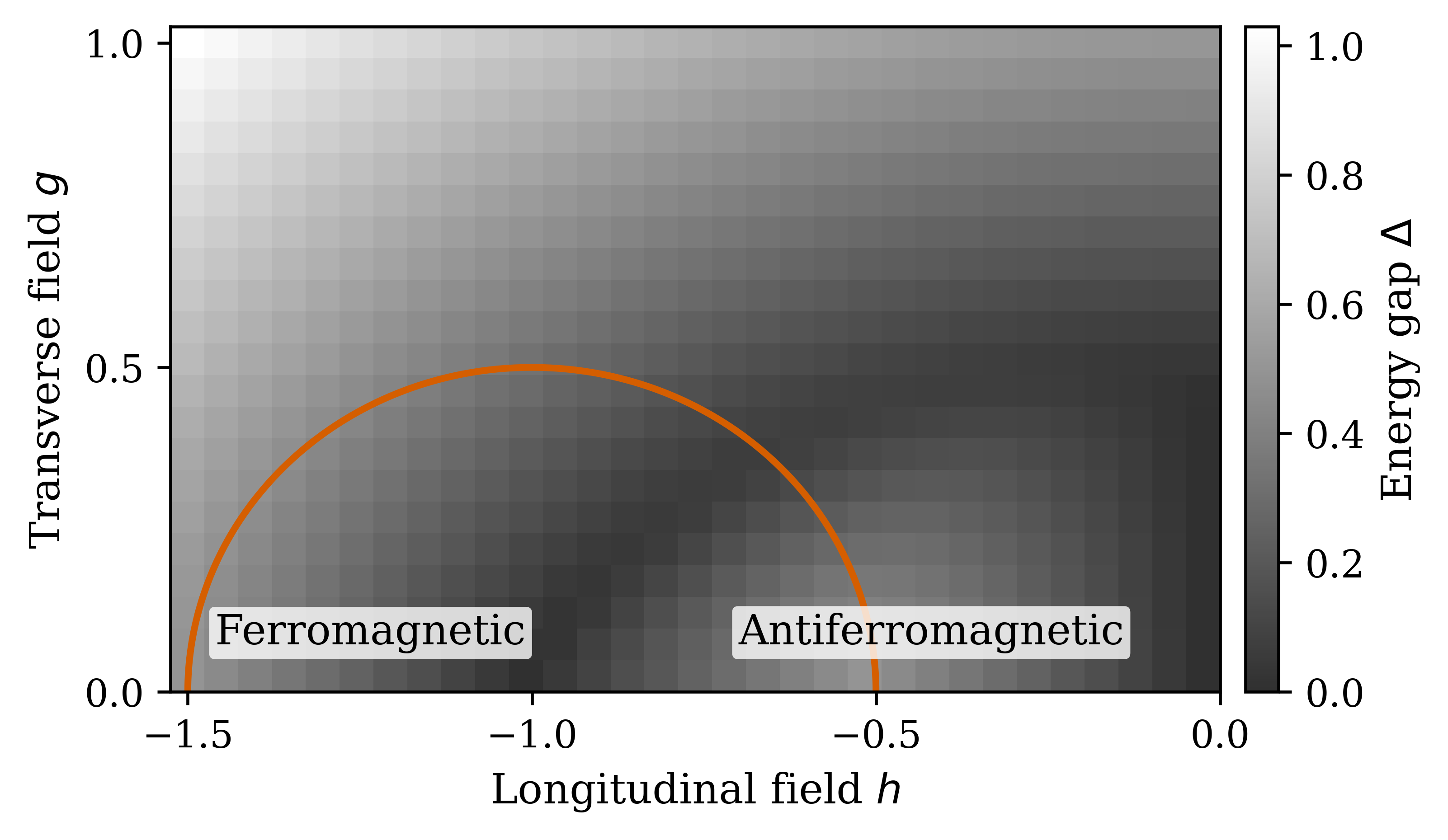}
    \put(-6,50){(a)}
  \end{overpic}
  \hfill
  \begin{overpic}[width=\linewidth,grid=false]{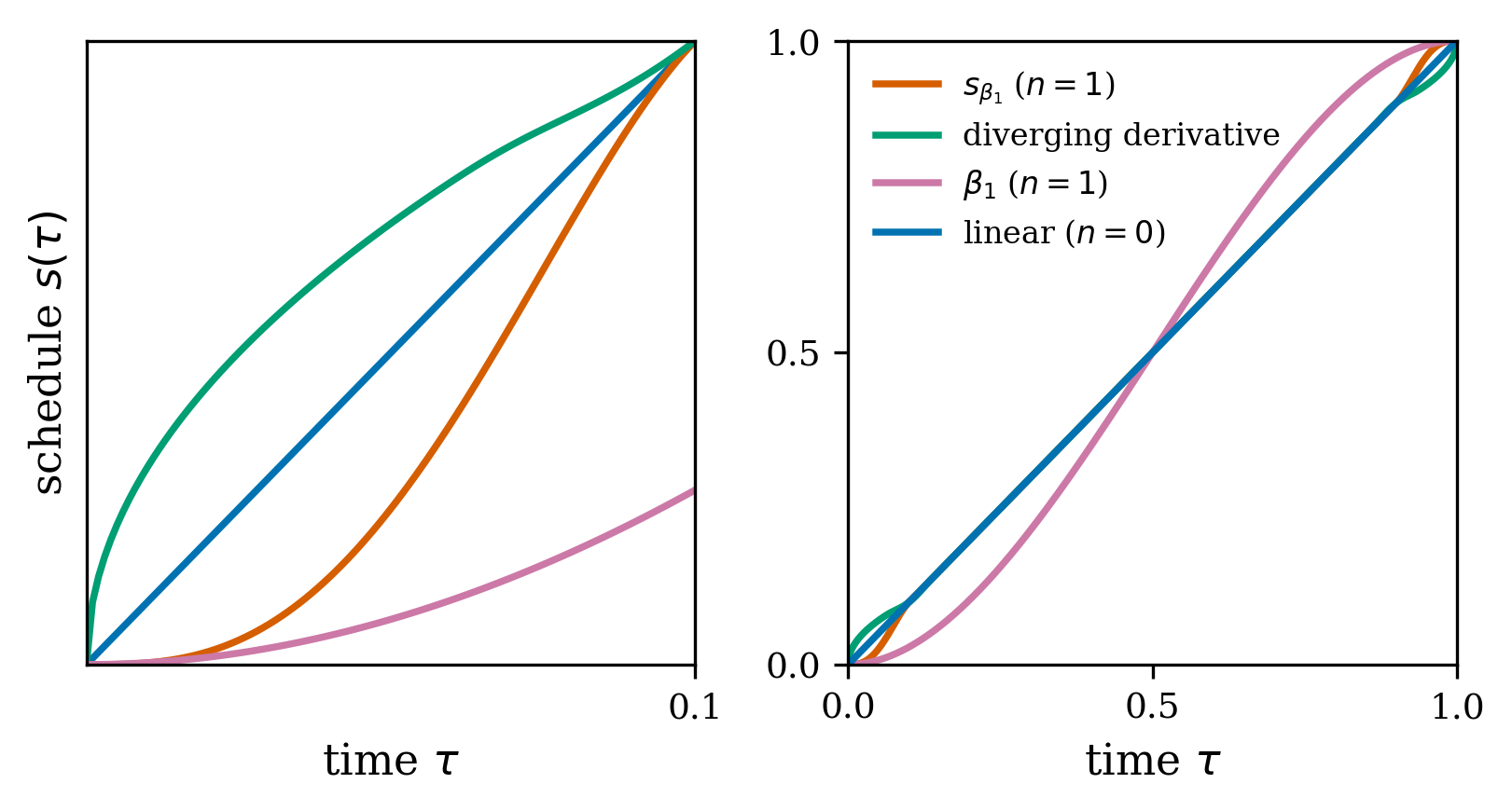}
    \put(-3,47){(b)}
  \end{overpic}
  \caption{(a) Energy gap and phases of the Ising model \eqref{eq:MFIM} with $L=21$ spins. The curve shows the path \eqref{eq:path} from the ferromagnetic to the anti-ferromagnetic phase. (b)~Schedule functions with linear, diverging and  vanishing $n^{th}$ order derivative in $\tau=0,1$ - see detailed forms in \cref{sec:schedule}. On the right the full schedule on $[0,1]$, on the left just the start.}
  \label{fig:schedule}
\end{figure}

\subsection{Schedule functions for adiabatic evolution}
We first consider the case where the matrix norm of $\partial_\tau H(\tau)$ is constant, and we assume a semicircular path parameterized by
 \begin{equation}\label{eq:path}
    \begin{aligned}
        h(\tau)=-(1 + \frac12 \cos(\pi \tau)), \,  g(\tau)=\frac12 \sin(\pi \tau)
    \end{aligned}
 \end{equation}
for $0\leq \tau \leq 1$. 

Given a Hamiltonian path $H(\tau)$ we can change the speed along the path, by switching to $H(s(\tau))$, where $s:[0,1]\rightarrow[0,1]$ is a monotonously increasing function from $s(0)=0$ to $s(1)=1$ which we will call the schedule. We are here not looking for an optimal schedule but we are interested in a comparison between the case of vanishing and non-vanishing boundary derivatives.
The error of the adiabatic evolution is determined by two factors: the speed at which the path passes through a phase transition, and the value of the derivatives at the end points. 
The reference schedule is a linear schedule.
In order to study the effect of the boundary conditions, we propose schedule functions $s_{\beta_n}$ which employ beta functions to set a fixed number $n$ of derivatives to zero at the beginning and end, but do not change the speed in the middle. One can also use a beta function $\beta_n$, that does change the speed in the bulk, as the schedule. We also consider a schedule with square root scaling, and hence with diverging derivatives at the boundaries.
Examples are shown in \cref{fig:schedule} and described in detail in \cref{sec:schedule}.

\subsection{Numerical results}\label{sec:results ising}
\begin{figure}[t]
    \centering
    \includegraphics[width=0.95\linewidth]{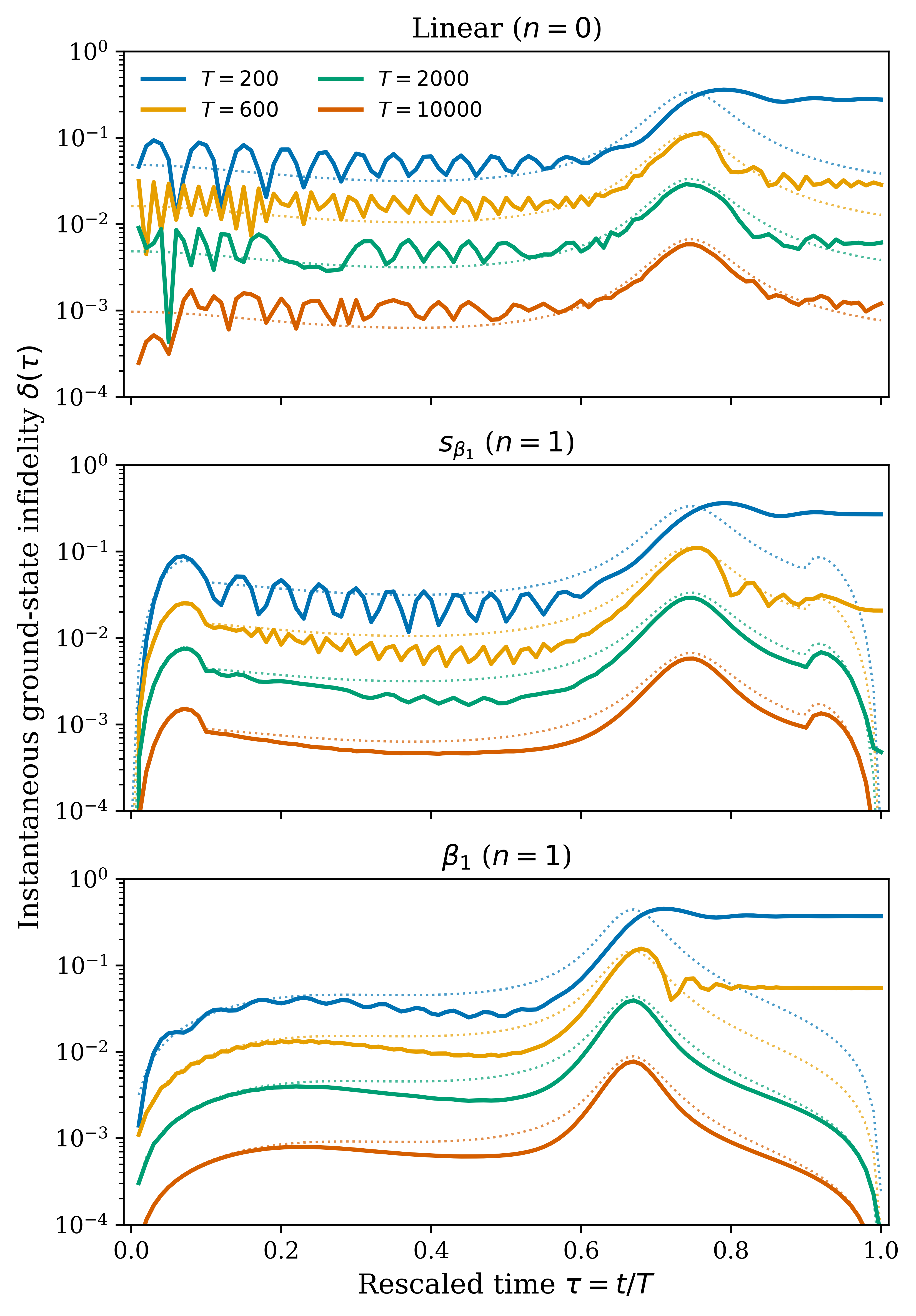}
    \caption{Time-evolution of the infidelity $\delta(\tau)$ during adiabatic passage of a length $L=11$ Ising chain for the linear, $s_{\beta_1}$, and $\beta_1$ schedules as shown in \cref{fig:schedule}. Note that the infidelity peak is slightly higher and occurs at an earlier time for $\beta_1$ because $\beta_1$ passes through the phase transition faster and at an earlier value of $\tau$. The color represents the total time $T$. Solid lines are the numerically simulated results. Dotted lines show the first-order component $\approx\eps\gamma_0(\tau)/\Delta_{01}(\tau)$.}\label{fig:infidelity_time_evolution}
\end{figure}

We simulate the dynamics of the mixed-field Ising model using matrix product states~\cite{hauschild2018tenpy}.
First we numerically study the one-dimensional mixed-field Ising model \eqref{eq:MFIM} with antiferromagnetic nearest-neighbor interactions. We use the Hamiltonian path and the schedules shown in \cref{fig:schedule} to prepare an antiferromagnetic state from a ferromagnetic state. \cref{fig:infidelity_time_evolution} shows the time-evolution of the instantaneous infidelity $\delta(\tau)=\sqrt{1-|\braket{\psi(\tau)|\Phi(\tau)}|^2}$ with the ground state $\ket{\Phi(\tau)}$ of $H(\tau)$, for different $\eps=1/T$. We compare three schedules: linear, $s_{\beta_1,f}$, and $\beta_1$.
All three schedules have similar infidelity values at intermediate times and a comparable peak value of infidelity at the minimum gap. 
We also numerically compare the first order correction, as derived in \cref{sec:first-order}; it shows good qualitative agreement for small $\eps$.
Very fast passage, corresponding to the exponential regime, results in the infidelity not recovering after the minimum gap, whereas slower passage, corresponding to the polynomial regime, recovers after the minimum gap. The difference between the schedules lies mainly at the beginning and end of the passage. For the linear schedule, the infidelity instantly jumps up and never fully recovers below this $\mathcal O(\eps)$ value at the end of the evolution. In contrast, for vanishing boundary derivatives, infidelity increases smoothly at the beginning and decreases smoothly at the end (in the polynomial regime), as the first-order term vanishes again. We also note that the initial jump is accompanied by subsequent oscillations of the infidelity throughout the dynamics of the linear schedule and the $s_{\beta_1}$ schedule.

\begin{figure}
    \includegraphics[width=\linewidth]{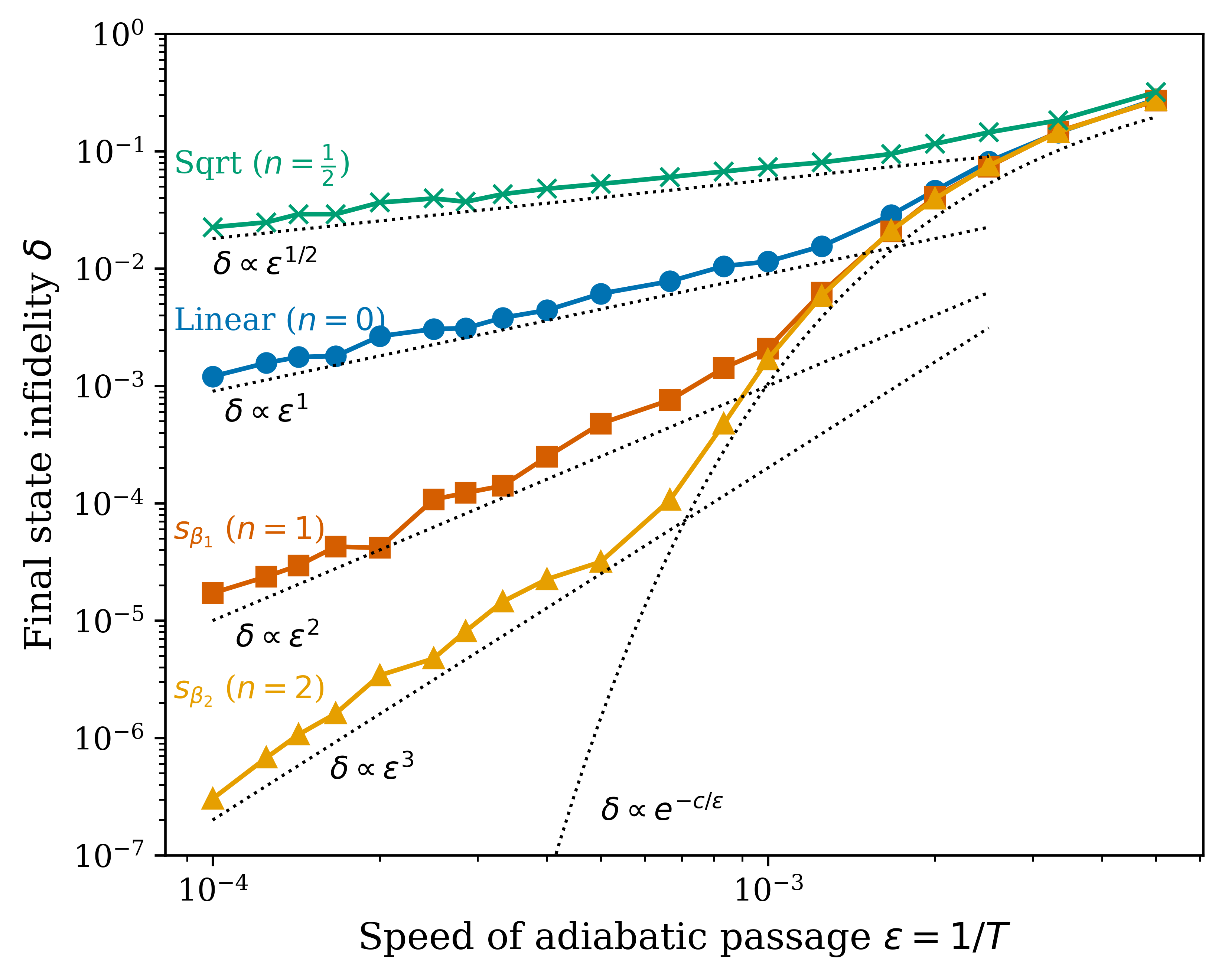}
    \caption{Ising model log-log plot of the final infidelity as a function of $\eps = 1/T$ for a length $L=11$ chain. In black are the theoretical scalings for the polynomial and exponential regime, with prefactors fitted to numerical results. Results are shown for $s_{\beta_n}$ schedules with $n$ vanishing derivatives for $n=0,1,2$ and for the Sqrt-schedule.}\label{fig:final_fidelity_L11}
\end{figure}

\cref{fig:final_fidelity_L11} shows the final infidelity $\delta = \sqrt{1-| \braket{\psi(1)|\Psi_{\textrm{target}}}|^2}$ as a function of $\eps=\frac{1}{T}$ for different schedules and chain length $L=11$. The black lines show the predicted scalings $\delta\propto\eps^{n+1}$, where $n$ is the number of vanishing derivatives, in the polynomial regime and $\delta\propto e^{-\frac{c}{\eps}}$ in the exponential regime. The prefactors and exponent $c$ were chosen to fit the numerical data. While we are not aware of theoretical results for the polynomial regime of the diverging derivative case, the numerical results match $\delta\propto\eps^{\frac{1}{2}}$.

To study the dependence on system size,
\cref{fig:final_fidelity_directBeta} shows the final infidelity as a function of $\eps=\frac{1}{T}$ for different schedules, but now for chain lengths $L=11$ as well as $L=21$.
We obtained similar results for all odd lengths between $L=11$ and $L=21$.
We again distinguish an exponential regime and a polynomial regime.
The exponential regime depends on system size: 
larger lengths $L$ take successively longer to reach the polynomial regime, and we fitted the exponent in the exponential regime to a power law dependence $c\approx 0.52L^{-1.83}$. There is no significant dependence on $n$, the number of vanishing derivatives, for the $s_{\beta_n}$ schedules in the exponential regime.
On the other hand, in the polynomial regime, the infidelity is governed by the value of  $n$, instead of $L$.
\cref{fig:final_fidelity_directBeta} also shows the results when using the $\beta_n$ schedule. We see qualitatively similar transitions from exponential to $\eps^{n+1}$ scaling of the infidelity, but the infidelity is worse at large $\eps$ with slower exponential convergence. Unlike the $s_{\beta_n}$ schedules, the exponent $c$ increases significantly with $n$ due to faster passage through the minimum gap. For small enough $\eps$ the $\beta_n$ schedules eventually catch up with and outperform $s_{\beta_n}$, reaching a $\eps^{n+1}$ regime with a better prefactor. We thus see a trade-off between performance at short and long times. For very small $\eps$ the $\beta_n$ schedules give the best results, while at intermediate $\eps$ the $s_{\beta_n}$ schedules give an infidelity that is up to an order of magnitude better.

\begin{figure}
    \includegraphics[width=0.95\linewidth]{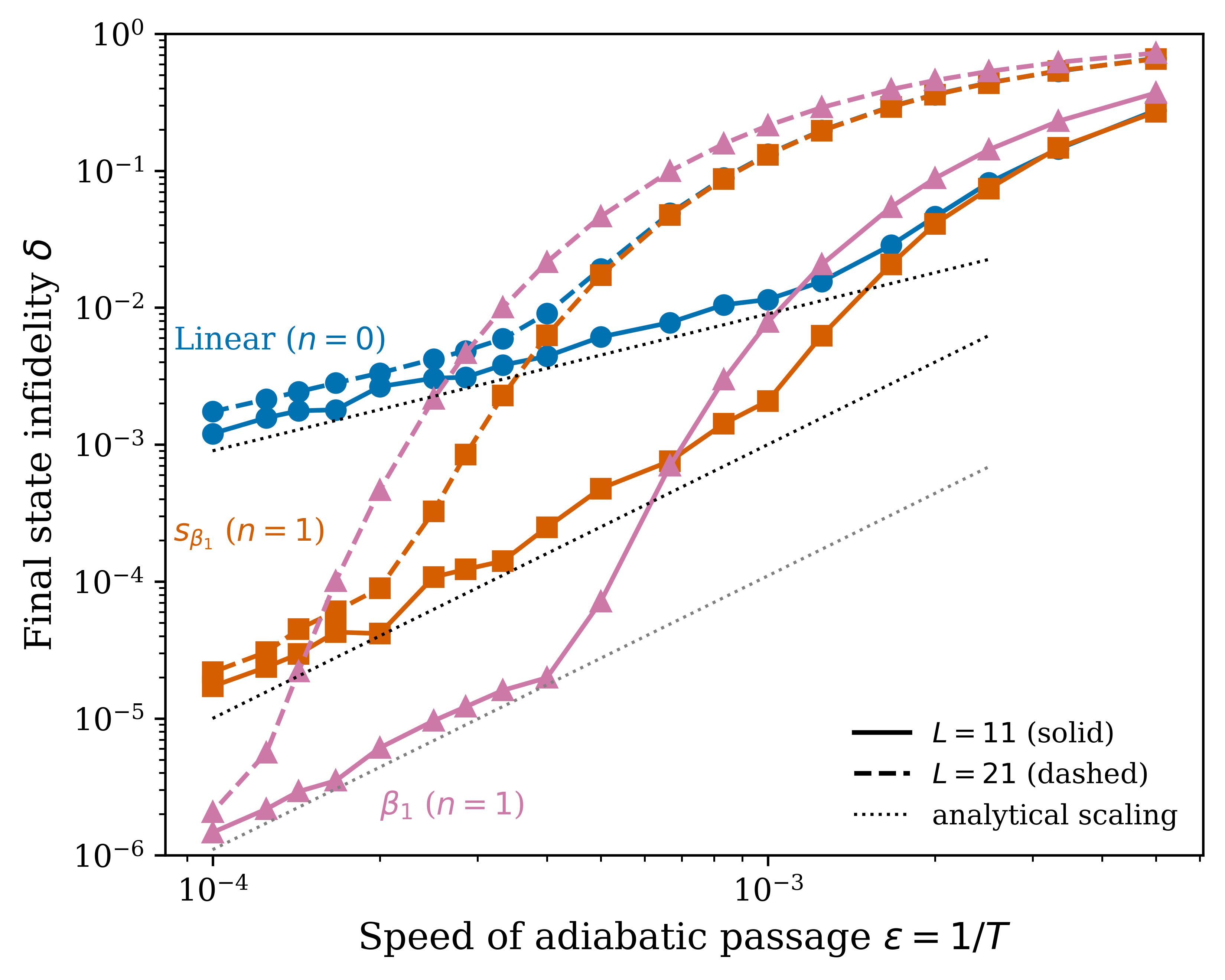}
    \caption{Log-log plot of final infidelity as a function of $\eps$, similar to \cref{fig:final_fidelity_L11}, for system size $L=11$ and $L=21$, to show dependence on system size.
    We show the $n=0$ schedule and the $\beta_n$ and $s_{\beta_n}$ schedule for $n=1$ (which have the same asymptotic scaling in $\eps$, but with a different prefactor).}\label{fig:final_fidelity_directBeta}
\end{figure}

\section{Results on a quantum simulator}\label{sec:quantumsimulator}

The above numerical results demonstrate that controlling the vanishing boundary derivatives is a powerful technique and can improve the fidelity of adiabatic state preparation by orders of magnitude, in an ideal noiseless setting.
We decided to run the protocol on the Aquila neutral atom device to investigate its performance on real (noisy) hardware.
The neutral atoms are governed by the Hamiltonian \cite{wurtz2023aquila}
\begin{equation}\label{eq:Haquila}
\begin{aligned}
    \frac{H}{\hbar} &= \frac{\Omega}{2}\sum_i \left(e^{i\phi}\ket{0_i}\hspace{-0.5ex}\bra{1_i} + h.c.\right) - \Delta \sum_i n_i\\
    &\qquad+ \sum_{i<j}n_i n_j\frac{C_6}{|\vec{r}_i-\vec{r}_j|^6}
\end{aligned}
\end{equation}
where $\ket{0_i},\ \ket{1_i}$ denote the ground and a Rydberg excited state of atom $i$, $\Omega$ is the Rabi frequency of coherent excitation of the atoms, and $n_i=\ket{1_i}\bra{1_i}$ and $C_6 = 862690 \cdot 2\pi \cdot \textrm{MHz}\cdot \textrm{\textmu s}^{-6}$ are the projection operators on Rydberg states and the strength of the Van der Waals interaction between Rydberg excited atoms.

This Hamiltonian does not implement an exact mixed field Ising model, since the nearest neighbor interaction is replaced by an all-to-all van der Waals interaction, but it displays a qualitatively similar phase transition from a ferromagnetic to an antiferromagnetic ground state.
We consider a spin chain with constant spacing $a=5.6\textrm{\textmu m}$. As with the Ising model, we consider adiabatic passage from the ferromagnetic to antiferromagnetic ground state, now along an elliptical path to accommodate constraints on the maximum Rabi frequency $\Omega$ in experiments. We use the constant speed parametrization
\begin{equation}\label{eq:Rydpath}
    \begin{aligned}
        & \Delta(\tau)=\Delta_R\cos\left(E^{-1}\left(\frac{P}{2\Omega_R}\tau,m\right)\right) \\
        & \Omega(\tau)=\Omega_R \sin\left(E^{-1}\left(\frac{P}{2\Omega_R}\tau,m\right)\right),
    \end{aligned}
 \end{equation}
where $\Omega_R=2.5\cdot 2\pi \cdot\textrm{MHz}$ is the maximum Rabi frequency and $\Delta_R=8.75\cdot 2\pi\cdot\textrm{MHz}$ and $\tau\in[0,1]$. Here $P=4\Delta_R E(\frac{\pi}{2},e^2)$ is the perimeter, $m=\frac{e^2}{e^2-1}$ and $e=\sqrt{1-\frac{\Omega_R^2}{\Delta_R^2}}$ the eccentricity, where 
 \begin{equation}
     E(\phi,m)=\int_0^{\phi}\sqrt{1-m\sin^2(t)}dt
 \end{equation}
 is the elliptic integral of the second kind, and we define the inverse elliptic integral of the second kind $E^{-1}(u,m)$ as being equal to $\phi_u$ such that $E(\phi_u,m)=u$.

\begin{figure}[t]
  \centering
  \begin{overpic}[width=0.9\linewidth,grid=false]{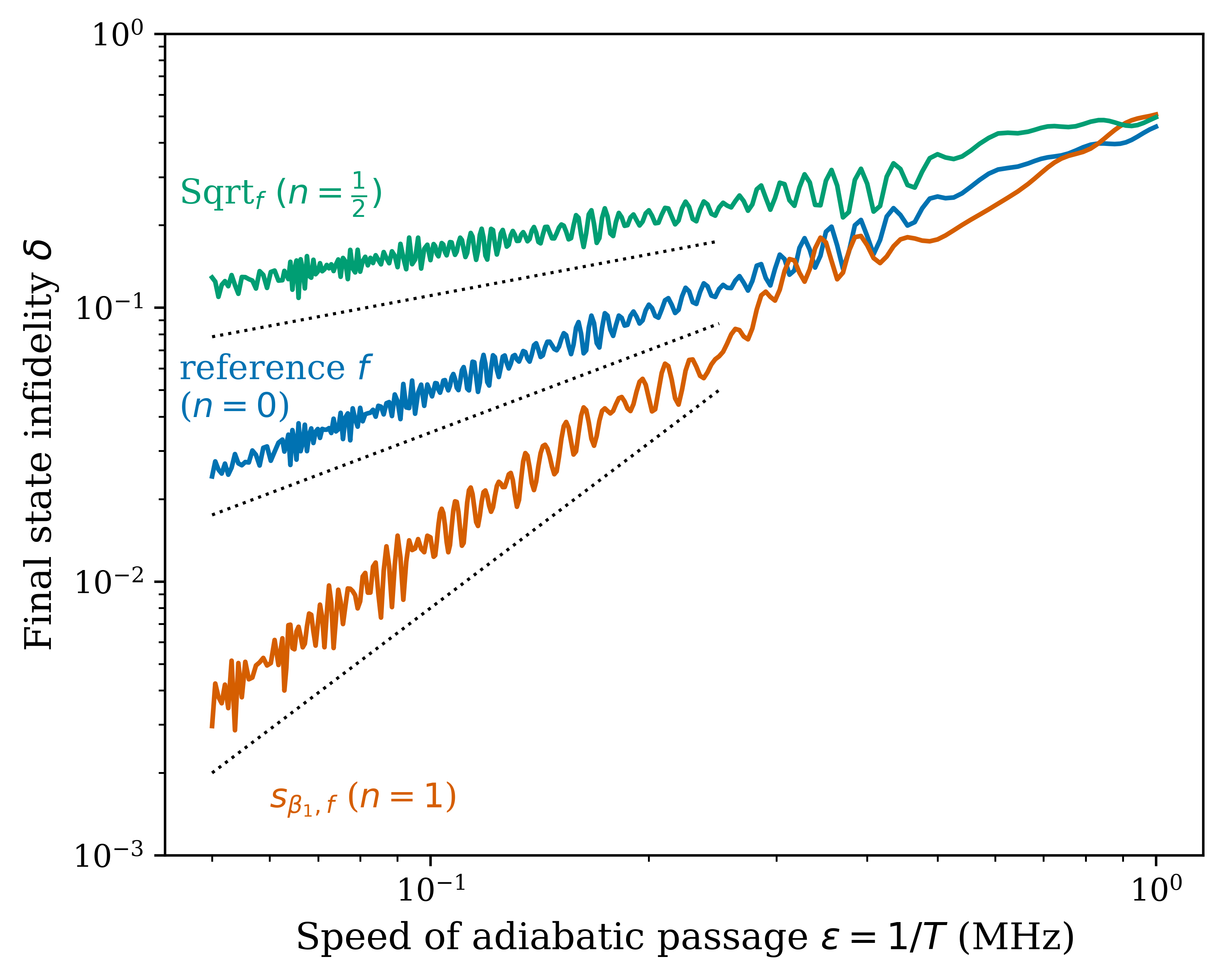}
    \put(0,70){(a)}
  \end{overpic}
  \hfill
  \begin{overpic}[width=0.9\linewidth,grid=false]{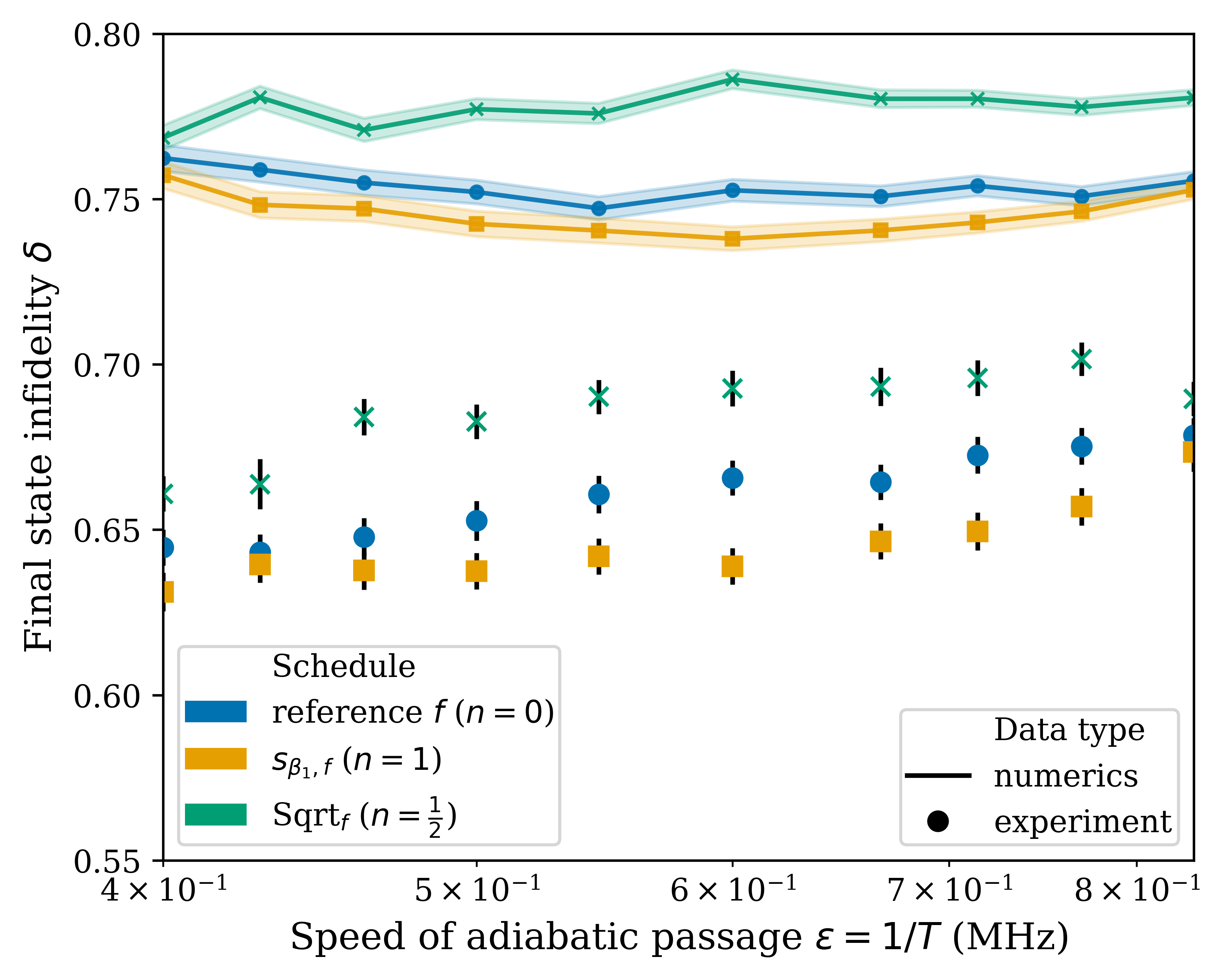}
    \put(0,70){(b)}
  \end{overpic}
  \caption{Rydberg model plot of final infidelity as a function of $\eps=\frac{1}{T}$. (a) Numerical results for a (noiseless) simulation, with a schedule derived from \cref{eq:IGapICosP05} and \cref{eq:igap_schedule}, for a reference schedule with no vanishing derivatives, the $s_{\beta_1 , f}$ schedule with $n=1$ vanishing derivatives, and the Sqrt$_f$-schedule with diverging derivatives. (b) Noisy results. Here we consider a smaller range of $T=1.2$\textmu{s} to $T=2.5$\textmu{s}. Results from a numerical simulation, using the error model of \cite{wurtz2023aquila}, as well as experiments on Aquila.}
  \label{fig:final_infidelity_Aquila_1}
\end{figure}

Using the trivial schedule $s(\tau) = \tau$ and neglecting noise, we again observe the expected improvement from vanishing derivatives, but only at times and infidelities beyond the practical range of the Aquila device.
To access a regime where boundary effects are visible at shorter times, we use a reference schedule that slows down near the minimum many-body gap, described in \cref{sec:schedule}. Similarly to before, we use this to construct schedules with vanishing boundary derivatives.
\cref{fig:final_infidelity_Aquila_1} shows the numerical results with this schedule for a chain of length 11 without noise, with different smoothness at $\tau = 0,1$.
We also used the built-in noise model of the Bloqade library and the characterization in \cite{wurtz2023aquila} to numerically simulate the results for this schedule using the Monte Carlo wavefunction method. These results are shown in \cref{fig:final_infidelity_Aquila_1} for a chain of length 11, together with the results from the Aquila device. We note that all infidelities are greater than $\approx 0.62$ due to systematic measurement errors \cite{wurtz2023aquila}. Instead of observing an infidelity that decreases with total time $T$, decoherence takes over and causes the infidelity to increase beyond $T\approx2\microsec$, and beyond $T\approx 4$ the difference between schedules with a different number of vanishing boundary derivatives becomes negligible.
In the interval where the infidelity is minimal, we still observe a separation between schedules with different numbers of vanishing boundary derivatives, albeit less pronounced than in the noiseless case.
The noise model from \cite{wurtz2023aquila} overestimates the infidelity. This reflects improvements in the hardware; the experiments shown in \cref{fig:final_infidelity_Aquila_1} were performed in June and July 2025.

\section{Conclusion and outlook}
We have presented a theoretical and experimental study of adiabatic state preparation in many-body systems with a focus on the role of boundary smoothness in time. The main theoretical result is an adiabatic theorem showing that $n$ vanishing time derivatives of the Hamiltonian at the initial and final times lead to a preparation error scaling as $T^{-(n+1)}$ for closed systems. We constructed schedule functions that impose these boundary conditions while minimally modifying a given reference schedule at intermediate times.
Numerical simulations in the Ising model and Rydberg model show that the vanishing boundary derivatives reduce the final infidelity by orders of magnitude in the polynomial regime in a noiseless setting, without influencing the results for fast passage in the exponential regime. 
For fast passage there is no advantage in using vanishing boundary derivatives and one is better off using other techniques such as passing slowly through the gap. What techniques to use thus depends heavily on the target timescale. Our experiments on the Aquila quantum processor show that vanishing boundary derivatives can have measurable signature on noisy experimental hardware. The gains are smaller than in the ideal noiseless setting, but come at little to no cost.  Our construction has the advantage of being flexible and easy to combine with other established techniques, such as passing slowly through the minimal gap, so there are plenty of use cases to explore.

\section{Acknowledgments}
The work was supported by the European Research Council (ERC Grant Agreement No. 81876), the Novo Nordisk Foundation (Grant NNF20OC0059939 “Quantum for Life”), and VILLUM FONDEN  via the QMATH Centre of Excellence (Grant No. 10059).
Numerical results for the Ising model were obtained with the Density matrix renormalization group (DMRG) and time-evolving block decimation (TEBD) algorithms implemented by the TeNPy library \cite{hauschild2018tenpy}. The Bloqade library \cite{bloqade2023quera} was used for numerical simulation of the Rydberg model.

\bibliographystyle{unsrt}
\bibliography{library}

\newcounter{eqsave}
\setcounter{eqsave}{\value{equation}}
\appendix
\counterwithout{equation}{section}
\renewcommand\theequation{\arabic{equation}}
\setcounter{equation}{\value{eqsave}}

\section{Schedule functions}\label{sec:schedule}

The linear ramp $s(\tau)=\tau$ will serve as a reference schedule. We construct functions with $n$ vanishing boundary derivatives based on $\beta_n(x)=B(x,n+1,n+1)$ where $B(x,a,b)$ is the regularized incomplete beta-function. As can be seen in \cref{fig:schedule}, directly using the beta function as schedule, it deviates significantly from the reference $f(x)=x$ at intermediate times and the derivative $\partial_x\beta_n(x)=\frac{(1-x)^nx^n}{\int_0^1 y^n (1-y)^n dy}$ is maximal around $x=0.5$, and increases with $n$.\\

\subsection{Piecewise smooth schedule functions}

To ensure a more direct comparison between different $n$ we use a piecewise construction $s_{\beta_n,f}$, defined in \cref{eq:Ssigma} and illustrated in \cref{fig:schedule}, which is equal to $f(x)$ at intermediate times and smoothly transitions to a beta function with $n$ vanishing boundary derivatives only at the beginning and end.

Let $f(x):[0,1]\rightarrow[0,1]$ be a $C^n$-smooth transition function, and $\sigma(x):[0,1]\rightarrow[0,1]$ a $C^n$-smooth transition function with $n$ vanishing boundary derivatives. Then
\begin{equation}\label{eq:Ssigma}
     s_{\sigma,f}:x\mapsto\begin{cases}
         \sigma\left(\frac{x}{d}\right)f(x)  \quad &\hspace{-3ex} x\leq d\\
         f(x) \quad &\hspace{-7.3ex} d\leq x\leq 1-d \\
         \sigma\left(\frac{(x-1+d)}{d}\right)  &\hspace{-3ex} x\geq 1-d \\
         \quad +\left(1-\sigma\left(\frac{(x-1+d)}{d}\right)\right)f(x)
        \end{cases}
\end{equation}
is also $C^n$-smooth, and has $n$ vanishing boundary derivatives. Here $d<0.5$ is an adjustable parameter controlling how much time is spent on smoothing. In this work, we always use $\sigma(x)=\beta_n(x)$ as the smoothening function.

We also consider a schedule with diverging boundary derivatives $\textrm{Sqrt}(x)$. Note that there is no theorem for the scaling with diverging boundary derivatives, but one would expect diverging boundary derivatives to perform worse, and this is indeed what we observe. On a noisy device, the higher infidelities caused by the $\textrm{Sqrt}(x)$ schedule stand out more from the noise and therefore functions as an easier control experiment.
The $\textrm{Sqrt}$ schedule function with diverging boundary derivatives is defined by
  \begin{equation}\label{eq:Sqrt}
     x\mapsto\begin{cases}
         \left(1-\sigma_{\infty}\left(\frac{x}{2d}\right)\right)d\sqrt{\frac{x}{d}} \quad & x\leq 2d  \\
         \quad + \sigma_{\infty}\left(\frac{x}{2d}\right)f(x)   \\
         f(x) \quad &\hspace{-4.3ex} d\leq x\leq 1-2d \\
         \sigma_{\infty}\left(\frac{x-1+2d}{2d}\right)\left(1-d\sqrt{\frac{1-x}{d}}\right) \quad  & x\geq 1-2d \\
         \quad + \left(1-\sigma_{\infty}\left(\frac{x-1+2d}{2d}\right)\right)f(x)
         
     \end{cases}
 \end{equation}
 where $\sigma_{\infty}(x)$ is a $C^{\infty}$-smooth transition function.
 This ensures that there are no discontinuities in the derivatives at intermediate times, although $\partial_x\textrm{Sqrt}(x)$ diverges at $\tau = 0, 1$.

\subsection{Rydberg reference schedule}
For the Rydberg chain, we use a reference schedule that slows down near the minimum many-body gap. Let $\Delta(x)$ denote the gap between the ground state and first excited state of $H(x)$ along the path in parameter space. The non-adiabatic couplings for a schedule $s(x)$ are proportional to $\partial_x s(x)/\Delta(s(x))$. We impose
\begin{equation}\label{eq:igap_schedule}
    \partial_x s(x)/\Delta(s(x))=cg(x)
\end{equation}
where $g(x) \geq 0$ is a given function and $c$ a normalization constant to be determined. Given the gap $\Delta(x)$ and a target function $g(x)$, we can solve this equation for the schedule $s(x)$. We computed the gap numerically with DMRG for a length 11 chain, and integrated the equation numerically. We used a truncated cosine 
\begin{equation}\label{eq:IGapICosP05}
    g(x)=(1.5\pi x+\cos(\pi x)-1)/(1.5\pi-2), 
\end{equation}
so the schedule slows down in the middle. To ensure that the schedule is $C^n$-smooth we fitted a degree 10 polynomial to the numerical values and used this polynomial as the schedule. This schedule is suboptimal for long times $T\geq 4\microsec$ but, crucially for us, works well for short times.

\section{Proof of \texorpdfstring{\cref{thm:adiabatic derivatives}}{Theorem 1}}\label{sec:proof}
The proof of \cref{thm:adiabatic derivatives} is a modification of arguments made in \cite{hagedorn1989adiabatic,hagedorn2002elementary,lidar2009adiabatic}.
We first describe the set-up, and state the results on the superadiabatic expansion we need, following \cite{lidar2009adiabatic}.
Let $H(\tau)$ be a family of $k$ times differentiable self-adjoint operators on a separable Hilbert space, and denote by $H^{(k)}(\tau)$ its $k$th derivative in $\tau$. We let $\ket{\Phi(\tau)}$ be a non-degenerate eigenvector of $H(\tau)$ with a continuous curve of energies $E(\tau)$, which we assume to be separated from the rest of the spectrum by a distance $d(\tau)\geq d_0>0$ (i.e. this is the spectral gap in case of the ground state). Let $\ket{\psi(\tau,\eps)}$ be the solution to the Schr\"odinger equation
\begin{equation}\label{eq:schroedinger}
    i\eps\frac{\partial\ket{\psi(\tau,\eps)}}{\partial \tau}=H(\tau)\ket{\psi(\tau,\eps)}
\end{equation}
with initial condition
\begin{equation}\label{eq:adiabatic_ic}
    \ket{\psi(0,\eps)}=\ket{\Phi(0)}.
\end{equation}
This corresponds to a time evolution along $H(t/T)$ for time $t = T\tau$ with $\eps = T^{-1}$.
Let $U(\tau_2,\tau_1,\eps)$ denote the unitary propagator of the Schr\"odinger operator for \cref{eq:schroedinger} from time $\tau_1$ to $\tau_2$.
The adiabatic distance is defined by the norm difference
\begin{equation} \label{eq:distance}
\delta(\tau)=\Vert\ket{\psi(\tau,\eps)}-e^{-i\int_0^\tau E(\tau')d\tau'/\eps}\ket{\Phi(\tau)}\Vert \, .
\end{equation}

We now review some facts about the superadiabatic expansion.
The superadiabatic expansion solves the Schr\"odinger equation order by order in $\eps$.
By doing this to order $n$, this constructs an approximation $\ket{\Psi_n(\tau,\eps)}$ to the state $\ket{\psi(\tau,\eps)}$ given by
\begin{equation}\label{eq:superadiabatic}
\begin{split}
    &e^{i\int_0^\tau E(\tau')d\tau'/\eps} \ket{\Psi_n(\tau,\eps)}  \\
    &= \theta(\tau) \ket{\Phi(\tau)}+\sum_{k=1}^{n+1}\eps^{k}\ket{\psi_k^{\perp}(\tau)}
\end{split}
\end{equation}
where the $\ket{\psi_k^{\perp}(\tau)}$ are orthogonal to $\ket{\Phi(\tau)}$.
$\ket{\Psi_n(\tau,\eps)}$ is a solution to an approximate Schr\"odinger equation
\begin{equation}\label{eq:schroedinger_zeta}
    i\eps\frac{\partial\ket{\Psi_n(\tau,\eps)}}{\partial\tau}-H\ket{\Psi_n(\tau,\eps)} = \zeta_n(\tau,\eps)  \, ,  \\
\end{equation}
where
\begin{equation}\label{eq:zeta}
    \zeta_n(\tau,\eps) = i\eps^{n+2}e^{-i\int_0^\tau E(\tau')d\tau'/\eps}\frac{\partial\ket{{\psi}_{n+1}^\perp(\tau)}}{\partial\tau}
\end{equation}
is an error that comes from truncating the superadiabatic expansion at order $n+1$.
The functions $\theta(\tau)$ and $\ket{\psi_j^\perp(\tau)}$ can be constructed iteratively.
The functions $\theta(\tau)$ are determined up to an integration constant, which can be chosen freely. We choose them such that $\theta(0)=1$.
Note that $\ket{\Psi_n(\tau,\eps)}$ is not necessarily normalized.
We need two facts about the superadiabatic expansion in \cref{eq:superadiabatic}.

Firstly, if the derivatives of $H(\tau)$ vanish up to some order at some point, then the states $\ket{\psi_j^\perp(\tau_1)}$ of the corresponding orders vanish.
\begin{lem}[Lemma 1 of \cite{lidar2009adiabatic}]\label{lem:vanishing derivatives}
    If all derivatives $H^{(k)}(\tau_1)=0$ for all $1\leq k\leq n$ and some $\tau_1\in [0,1]$, then 
\begin{equation}
    \ket{\psi_j^\perp(\tau_1)}=0 \quad\textrm{for}\quad j\in\{1,...,n\}\, .
\end{equation}
\end{lem}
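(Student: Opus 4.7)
The plan is to substitute \cref{eq:superadiabatic} into the Schr\"odinger equation \cref{eq:schroedinger}, collect orders of $\eps$ to obtain a recursion for the $\ket{\psi_k^\perp(\tau)}$, and prove by induction on $k$ that every term in that recursion contains at least one factor $H^{(j)}(\tau)$ with $1\le j\le k$. Since each such factor vanishes at $\tau_1$ by hypothesis whenever $k\le n$, so does $\ket{\psi_k^\perp(\tau_1)}$.

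Concretely, after substituting and using $(H(\tau)-E(\tau))\ket{\Phi(\tau)}=0$, I match powers of $\eps$. The order-$\eps^1$ equation splits under the projections $P(\tau)=\proj{\Phi(\tau)}$ and $Q(\tau)=I-P(\tau)$: the parallel (Fredholm) part gives the Berry-phase ODE for $\theta(\tau)$ (which in the parallel-transport gauge $\braket{\Phi|\Phi'}=0$ reduces to $\theta\equiv 1$), while the orthogonal part is solved by inverting $H(\tau)-E(\tau)$ on the complement, producing $\ket{\psi_1^\perp(\tau)}$ explicitly. For $k\ge 2$ the order-$\eps^k$ balance produces
\begin{equation*}
    \ket{\psi_k^\perp(\tau)}=iR(\tau)Q(\tau)\partial_\tau\ket{\psi_{k-1}^\perp(\tau)},
\end{equation*}
where $R(\tau)=\bigl[(H(\tau)-E(\tau))|_{Q}\bigr]^{-1}$ is the reduced resolvent.

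Next I would record the structural identities that power the induction. Differentiating the eigen-equation once gives $Q\ket{\Phi'(\tau)}=-R(\tau)Q\,H'(\tau)\ket{\Phi(\tau)}$, so $\ket{\Phi'}$ is linear in $H'$; and from $R'(\tau)=-R(\tau)(H'(\tau)-E'(\tau))R(\tau)$, equivalently from the contour representation $P(\tau)=(2\pi i)^{-1}\oint_{\gamma}(z-H(\tau))^{-1}dz$, every higher derivative of $R(\tau)$ or of $\ket{\Phi(\tau)}$ is a polynomial in resolvents interleaved with factors $H^{(j)}(\tau)$, $j\ge 1$, whose orders sum to the order of differentiation.

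The inductive claim is then: each term of $\ket{\psi_k^\perp(\tau)}$ is a product of resolvents and factors $H^{(j_1)}(\tau),\dots,H^{(j_m)}(\tau)$ acting on $\ket{\Phi(\tau)}$, with every $j_i\ge 1$ and $\sum_i j_i=k$. The base case $k=1$ is immediate from the explicit formula. For the step, applying $\partial_\tau$ in the recursion either raises some existing $H^{(j)}$ to $H^{(j+1)}$, introduces a new $H'$ via differentiation of $R$ or of $\ket{\Phi}$, or (if it hits $\theta$) produces $\theta'$, which itself contains $H'$ through the Berry-phase ODE; in each case the total order grows by exactly one. For $k\le n$ we have $j_i\le k\le n$ in every term, so every $H^{(j_i)}(\tau_1)=0$ and hence $\ket{\psi_k^\perp(\tau_1)}=0$. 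The main obstacle will be the combinatorial bookkeeping in the inductive step, and in particular ensuring that the gauge choice used at each order to disentangle $\theta(\tau)$ from $\ket{\psi_k^\perp(\tau)}$ preserves the polynomial-in-$H^{(j)}$ structure; this reduces to checking that the ODEs for $\theta(\tau)$ at each order are themselves driven by $H$-derivatives of matching total order, which is notationally heavy but direct.
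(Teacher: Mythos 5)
Your sketch is correct in outline, but note that the paper itself offers no proof of this lemma: it is imported verbatim as Lemma~1 of \cite{lidar2009adiabatic}, and what you have written is essentially a reconstruction of the standard argument from that reference (derive the order-by-order recursion via the reduced resolvent, then induct on the total order of $H$-derivatives appearing in each term), so in substance you are following the same route as the cited source. One imprecision worth flagging: your strengthened inductive claim that the derivative orders satisfy $\sum_i j_i = k$ in every term fails once the scalar coefficients $\theta_j(\tau)$ enter the recursion at order $k\ge 2$ through the source term $\theta_{k-1}(\tau)\,Q\ket{\Phi'(\tau)}$ --- these coefficients are integrals over $[0,\tau]$ and need not vanish at $\tau_1$, so they carry no local derivative order; however, the weaker claim you state at the outset, that every term contains at least one \emph{local} factor $H^{(j)}(\tau)$ with $1\le j\le k$ (here supplied by $Q\ket{\Phi'}=-R\,Q\,H'\ket{\Phi}$), survives the induction and is all the lemma requires.
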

We also use the following basic estimate.
\begin{lem}[Generalization of Lemma 2.1 from \cite{hagedorn1989adiabatic}]\label{lem:error bound superadiabatic}
    If $\ket{\Psi_n(\tau,\eps)}$ solves the approximate Schr\"odinger equation \eqref{eq:schroedinger_zeta} and $\ket{\psi(\tau,\eps)}$ is a solution to the exact Schr\"odinger equation \eqref{eq:schroedinger}, then 
\begin{equation}\label{eq:lemma21_bound}
\begin{split}
    \Vert\ket{\psi(\tau,\eps)}-&\ket{\Psi_n(\tau,\eps)}\Vert \leq \int_0^\tau\eps^{-1}\Vert\zeta_n(s,\eps)\Vert ds \\
    &+ \norm{\ket{\psi(0,\eps)}-\ket{\Psi_n(0,\eps)}},
\end{split}
\end{equation}
for $\tau\in[0,1]$.
\end{lem}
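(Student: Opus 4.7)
The plan is a Duhamel (variation of constants) argument applied to the difference vector $\ket{\Delta(\tau)} := \ket{\psi(\tau,\eps)} - \ket{\Psi_n(\tau,\eps)}$. Subtracting the exact equation \eqref{eq:schroedinger} from the approximate equation \eqref{eq:schroedinger_zeta} yields the linear inhomogeneous Schr\"odinger-like equation
\begin{equation*}
i\eps\,\partial_\tau \ket{\Delta(\tau)} = H(\tau)\ket{\Delta(\tau)} - \zeta_n(\tau,\eps),
\end{equation*}
with initial value $\ket{\Delta(0)} = \ket{\psi(0,\eps)} - \ket{\Psi_n(0,\eps)}$, whose norm is exactly the second term on the right in \eqref{eq:lemma21_bound}. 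The homogeneous part is generated by the self-adjoint family $H(\tau)$, so I would invoke the associated unitary propagator $U(\tau,s,\eps)$ defined by $i\eps\,\partial_\tau U(\tau,s,\eps) = H(\tau)U(\tau,s,\eps)$ and $U(s,s,\eps) = I$; this is the same propagator that already underpins the exact evolution of $\ket{\psi(\tau,\eps)}$.

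Next I would write down the variation-of-constants solution
\begin{equation*}
\ket{\Delta(\tau)} = U(\tau,0,\eps)\ket{\Delta(0)} + \frac{i}{\eps}\int_0^\tau U(\tau,s,\eps)\,\zeta_n(s,\eps)\,ds,
\end{equation*}
which one verifies by differentiating in $\tau$, using $U(\tau,\tau,\eps)=I$ for the boundary contribution from the upper limit and the defining equation of $U$ for the term under the integral. Taking norms, invoking the isometry property $\Vert U(\tau,s,\eps)v\Vert = \Vert v \Vert$ of unitaries, and passing the norm inside the integral via the triangle inequality, collapses the right-hand side into exactly \eqref{eq:lemma21_bound}.

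A cleaner propagator-free alternative I would keep in reserve is to differentiate $g(\tau) := \braket{\Delta(\tau)|\Delta(\tau)}$ directly: self-adjointness of $H(\tau)$ makes the homogeneous contribution cancel and leaves $\partial_\tau g(\tau) = -(2/\eps)\,\mathrm{Im}\braket{\zeta_n(\tau,\eps)|\Delta(\tau)}$. Applying Cauchy--Schwarz together with the chain rule (with a standard $\sqrt{g+\delta}$ regularization at zeros of $\Vert\ket{\Delta(\tau)}\Vert$) yields $|\partial_\tau\Vert\ket{\Delta(\tau)}\Vert| \leq \eps^{-1}\Vert\zeta_n(\tau,\eps)\Vert$, which integrates to the claim. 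The only non-routine step in either route is confirming existence and unitarity of $U(\tau,s,\eps)$: for the finite-dimensional lattice Hamiltonians used in the paper this is immediate from Picard--Lindel\"of, and for the general separable-Hilbert-space setting stated in the lemma it follows from the Kato-type existence theory implicit in \cite{hagedorn1989adiabatic}. Beyond that I do not anticipate any real obstacle, since the argument is a textbook Duhamel estimate.
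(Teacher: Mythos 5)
Your proposal is correct and takes essentially the same route as the paper: the paper applies unitarity of $U$ and a triangle inequality and then defers to Hagedorn's Lemma 2.1, which is precisely the Duhamel/variation-of-constants computation you carry out explicitly (differentiating the backward-propagated state and integrating the inhomogeneity $\zeta_n$). Your version is simply a self-contained rendering of that same argument, with the initial-condition discrepancy absorbed into the Duhamel formula rather than split off by a triangle inequality first.
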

\begin{proof} First we use unitarity of $U(\tau,0,\epsilon)$ and a triangle inequality to bound the LHS as follows
\begin{equation}\label{eq:lemma21_proof1}
\begin{split}
    &\norm{U(\tau,0,\epsilon)\ket{\psi(0,\eps)} - \ket{\Psi_n(\tau,\eps)}} \\
    &=\norm{\ket{\psi(0,\eps)} - U(0,\tau,\epsilon)\ket{\Psi_n(\tau,\eps)}} \\
    &\leq\norm{\ket{\Psi_n(0,\eps)} - U(0,\tau,\epsilon)\ket{\Psi_n(\tau,\eps)}} \\
    & + \norm{\ket{\psi(0,\eps)}-\ket{\Psi_n(0,\eps)}} .
\end{split}
\end{equation}
The remainder of the proof is identical to that in \cite{hagedorn1989adiabatic}.
\end{proof}

We now prove \cref{thm:adiabatic derivatives} by an argument similar to that in \cite{lidar2009adiabatic}.
First we note that since we assume that $H^{(k)}(0) = H^{(k)}(1) = 0$ for $1 \leq k \leq n$, we get from \cref{lem:vanishing derivatives} and \cref{eq:superadiabatic} that
\begin{equation}\label{eq:Psi at 0}
    \begin{split}
\ket{\Psi_n(0,\eps)} 
    &=\theta(0)\ket{\Phi(0)}+\eps^{n+1}\ket{\psi_{n+1}^\perp(0)}\\
    &= \ket{\Phi(0)}+\eps^{n+1}\ket{\psi_{n+1}^\perp(0)} \, ,
    \end{split}
\end{equation}
since we chose $\theta(0) = 1$ above, and
\begin{equation}\label{eq:Psi at 1}
    \begin{split}
        &e^{i\int_0^1 E(\tau')d\tau'/\eps} \ket{\Psi_n(1,\eps)} \\ 
    &=\theta(1)\ket{\Phi(1)}+\eps^{n+1}\ket{\psi_{n+1}^\perp(1)} \, .
    \end{split}
\end{equation}
We define the following error terms:
\begin{equation}
    \begin{split}
        \delta_0 &:= \eps^{n+1} \norm{\ket{\psi_{n+1}^\perp(0)}} \\
        \delta_1 &:= \eps^{n+1} \norm{\ket{\psi_{n+1}^\perp(1)}} \\
        \eta &:= \eps^{-1}\int_0^1 \Vert\zeta_n(s,\eps)\Vert ds  \\
        &=  \eps^{n+1}\int_0^1 \left\Vert \frac{\partial\ket{\psi_{n+1}^\perp(\tau')}}{\partial \tau'} \right\Vert d\tau' \, ,
    \end{split}
\end{equation}
from \cref{eq:zeta}. 
Each of these is $O(\eps^{n+1})$.

The states $\ket{\Psi_n(\tau,\eps)}$ are not normalized, but by \cref{eq:Psi at 0},
\begin{equation}\label{eq:norm Psi at 0}
   \abs{\norm{\ket{\Psi_n(0,\eps)}} -1 } \leq \delta_0 \, .
\end{equation}
Additionally, by Lemma 3 and \cref{eq:Psi at 0},
\begin{equation}
\begin{split}\label{eq:superadiabatic-exact}
    &\norm{\ket{\Psi_n(1,\epsilon)}-\ket{\psi(1,\epsilon)}} \\
    &\leq \eta + \norm{\ket{\Psi_n(0,\epsilon)}-\ket{\psi(0,\epsilon)}} \\
    &= \eta + \norm{\ket{\Psi_n(0,\epsilon)}-\ket{\Phi(0)}} \\
    &\leq \eta + \delta_0 =O(\epsilon^{n+1}).
\end{split}
\end{equation}
Since $\ket{\psi(\tau,\epsilon)}$ is normalized, it follows that the states $\ket{\Psi_n(\tau,\eps)}$ for $\tau = 0,1$ are normalized up to terms of order $\eps^{n+1}$. In particular, 
\begin{equation}\label{eq:theta 1 close to 1}
\begin{split}
    \abs{\theta(1)} &= \norm{e^{i\int_0^1 E(\tau')d\tau'/\eps}\ket{\Psi_n(1,\eps)} - \eps^{n+1}\ket{\psi_n^{\perp}(1,\eps}} \\
    &\geq 1 - (\eta + \delta_0 + \delta_1) \\
        &= 1 - O(\eps^{n+1}) \, 
\end{split}
\end{equation}
follows from \cref{eq:Psi at 1}, \cref{eq:superadiabatic-exact} and the definition of $\delta_1$.
Next we compute
\begin{equation}\label{eq:superadiabitic-ground}
\begin{split}
    \norm{&\ket{\Psi_n(1,\epsilon)}-e^{-i\int_0^1 E(\tau')d\tau'/\eps}\ket{\Phi(1)}} \\
    &\leq\norm{e^{-i\int_0^1 E(\tau')d\tau'/\eps}(\theta(1)-1)\ket{\Phi(1)}} \\
    & \quad +\norm{\epsilon^{n+1}\ket{\psi^\perp_{n+1}(1)}} \\
    &=\abs{\theta(1)-1}+\delta_1 \\
    &\leq \eta+\delta_0+2\delta_1 = O(\eps^{n+1})
\end{split}
\end{equation}
using \cref{eq:Psi at 1} in the first inequality and \cref{eq:theta 1 close to 1} in the second one.
Bringing everything together, we get
\begin{equation}\label{eq:conclusion}
\begin{split}
    \delta(1)&=\norm{\ket{\psi(1,\epsilon)}-e^{-i\int_0^1 E(\tau')d\tau'/\eps}\ket{\Phi(1)}} \\
    &\leq \norm{\ket{\psi(1,\epsilon)}-\ket{\Psi_n(1,\epsilon)}} \\
    &\quad + \norm{\ket{\Psi_n(1,\epsilon)}-e^{-i\int_0^1 E(\tau')d\tau'/\eps}\ket{\Phi(1)}}  \\
    &\leq 2\eta+2\delta_0+2\delta_1 =O(\eps^{n+1})
\end{split}
\end{equation}
using \cref{eq:superadiabatic-exact} and \cref{eq:superadiabitic-ground} in the last line.

\section{Estimation of first-order component}\label{sec:first-order}
Consider the case where $\ket{\Phi(\tau)} = \ket{\Phi_0(\tau)}$ is the ground state.
If $\ket{\Phi_n(\tau)}$ denote the higher energy eigenstates with energies $E_n(\tau)$, then the 
the first-order leakage into $\ket{\Phi_n(\tau)}$ is \cite{benseny2021adiabatic}
\begin{equation}
    \delta^1_{0,n}(\tau) = \eps\frac{\bra{\Phi_n(\tau)}\partial_\tau \ket{\Phi_0(\tau)}}{\Delta_{n,0}(\tau)}
\end{equation}
where $\Delta_{k,n}(\tau) = E_k(\tau)-E_n(\tau)$. Most leakage occurs to the first excited state, so instead of computing all couplings we approximate them by $\eps\gamma_0(\tau)/\Delta_{1,0}(\tau)$ where
\begin{align}\label{eq:gamma_gap}
\begin{split}
    \gamma_0(\tau) &= \sqrt{1-|\braket{\Phi_0(\tau) |\partial_\tau| \Phi_0(\tau)}|^2} \\
    &\approx \sqrt{1-|\braket{\Phi_0(\tau) | \Phi_0(\tau+d\tau)}/d\tau|^2}  \, .
    \end{split}
\end{align}
Note that this is a conservative estimate because the higher nonadiabatic coupling terms are included in $\gamma_0$ but are not suppressed by larger energy gaps $\Delta_{k,0}$.

\end{document}